\DeclarePairedDelimiterX{\infdivx}[2]{(}{)}{
  #1\;\delimsize\|\;#2
}
\newtheorem{theo}{Theorem}
\newtheorem{defi}{Definition}
\newtheorem{prop}{Proposition}
\newtheorem{ex}{Example}
\newcommand{\FCGE}{{fractional graph entropy}}
\newcommand{\FCN}{{fractional chromatic number}}
\newcommand{\FCE}{{fractional chromatic entropy}}
\newcommand{\HGfrac}{{H^{f}_{G_{X_1}}(X_1\vert X_2)}}
\newcommand{\HGchifrac}{{H^{\chi_f}_{G_{X_1}}(X_1\vert X_2)}}
\newcommand{\GXone}{G_{X_1}^w}
\newcommand{\GXtwo}{G_{X_2}^w}
\newcommand{\GXonen}{G_{{\bf X}_1}^{n,w}}
\newcommand{\GXtwon}{G_{{\bf X}_2}^{n,w}}
\newcommand{\GXonetwo}{G_{{\bf X}_1}^{2,w}}
\newcommand{\Graph}{G_{X_1}}
\newcommand{\nPowerGraph}{G^n_{{\bf X_1}}}
\newcommand{\HGchipowernfrac}{{H^{\chi_f}_{G^n_{{\bf X_1}}}({\bf X_1}\vert {\bf X_2})}}
\newcommand{\coloringf}{c^f_{{G_{X_1}}}(X_1)}
\newcommand{\coloringxf}{c^f_{{G_{X_1}}}}
\newcommand{\coloringpowernf}{c^f_{{G^n_{{\bf X_1}}}}({\bf X_1})}
\newcommand{\coloringpowernxf}{c^f_{{G^n_{{\bf X_1}}}}}
\begin{document}
\title{Weighted Graph Coloring for Quantized Computing}

\author{
  \IEEEauthorblockN{Derya Malak}
  \IEEEauthorblockA{Communication Systems Department,   EURECOM, Sophia Antipolis, 06904  France\\
derya.malak@eurecom.fr
}
\thanks{Funded by the European Union (ERC, SENSIBILITÉ, 101077361). Views and opinions expressed are however those of the author only and do not necessarily reflect those of the European Union or the European Research Council. Neither the European Union nor the granting authority can be held responsible for them.}
}

\maketitle

\begin{abstract}
We consider the problem of distributed lossless computation of a function of two sources by one common user. To do so, we first build a bipartite graph, where two disjoint parts denote the individual source outcomes. We then project the bipartite graph onto each source to obtain an edge-weighted characteristic graph (EWCG), where edge weights capture the function's structure, by how much the source outcomes are to be distinguished, generalizing the classical notion of characteristic graphs. Via exploiting the notions of characteristic graphs, the fractional coloring of such graphs, and edge weights, the sources separately build multi-fold graphs that capture vector-valued source sequences, determine vertex colorings for such graphs, encode these colorings, and send them to the user that performs minimum-entropy decoding on its received information to recover the desired function in an asymptotically lossless manner. For the proposed EWCG compression setup, we characterize the fundamental limits of distributed compression, verify the communication complexity through an example, contrast it with traditional coloring schemes, and demonstrate that we can attain compression gains higher than $\% 30$ over traditional coloring.
\end{abstract}

\section{Introduction}
\label{sec:intro}

Over the past years, we have been experiencing an ever-increasing demand for computationally-intensive tasks, motivating us to devise new parallel processing techniques to speed up and efficiently distribute computations across groups of servers.  
In modern distributed computing, a primary concern is communication cost. While parallel processing to distribute communication can reduce the need for coordination and alleviate this cost, reduction of the same communication cost is challenged due to issues of scalability \cite{soleymani2021analog}, accuracy \cite{wang2021price}, low capacity edges \cite{ho2006random}, and stragglers \cite{behrouzi2020efficient} in distributed computing.

\subsection{Related Work} 
\label{sec:related_work}

{\bf Distributed coded computation.} There have been various efforts to mitigate the communication cost in distributed computing following Yao's seminal work in \cite{yao1979some} on communication complexity. Some recent breakthroughs in this direction include coded computing 
\cite{soleymani2021analog,YuRavSoAve2018,reisizadeh2019coded,prakash2020coded}, and distributed computation of, e.g., matrix products~\cite{jia2021capacity,wan2022cache,chang2018capacity,chen2021gcsa}, distributed batch computation \cite{jia2021cross}, 
matrix multiplication with stragglers \cite{li2021flexible}, secure matrix multiplication \cite{jia2021capacity}, \cite{chang2018capacity}, \cite{chen2021gcsa}, 
cache-aided general linear function retrieval \cite{wan2020cache}, 
and linearly separable functions \cite{khalesi2022multi,wan2021distributed,wan2022secure}.

{\bf Distributed source and functional compression.} Other attempts have been inspired from the seminal work of Slepian-Wolf \cite{SlepWolf1973} on distributed source compression, the rate-distortion coding models of Wyner-Ziv with side information  \cite{WynZiv1976}, and for lossy source coding \cite{yamamoto1982wyner}, toward function computation. These works include~\cite{OR01,Kor73,AO96,feizi2014network} that consider function computation over networks, as well as \cite{feizi2014network} and \cite{doshi2007source}, considering the generalization to functional rate-distortion, and \cite{basu2022hypergraph} and \cite{basu2020functional}, focusing on hypergraph-based source coding and function approximation under maximal distortion.  
Recent works also include hyperbinning for distributed function quantization \cite{Malak2022hyperbin},  generalizing the orthogonal binning ideas in Slepian-Wolf coding \cite{cover1975proof},  and fractional coloring-based distributed computation \cite{malak2022fractional} that reduces complexity of \cite{Kor73}.

{\bf Coding for specific functions and channels.} The communication cost is also affected by the nature of the computed function. 
Examples include K{\"o}rner-Marton' encoding problem for computing {\em modulo-two sum of binary sources} \cite{korner1979encode}, the generalization of K{\"o}rner-Marton's problem to a two-terminal source coding scheme with {\em common sum reconstruction} \cite{adikari2022two}, which has applications in distributed stochastic gradient descent, power iteration, and Max-Lloyd's algorithms \cite{suresh2017distributed} to compute large-scale averages over a large number of servers.  
Han and Kobayashi have established necessary and sufficient conditions on functions such that the Slepian-Wolf region is optimal for distributed lossless computing \cite{han1987dichotomy}. 
The authors in \cite{witsenhausen1976zero,ahlswede1979coloring,korner1998zero} have explored the {\em combinatorial aspects of zero-error source coding} to compress correlated sources separately or for compression 
with decoder side information. 
The {\em joint source-channel scheme} of Cover, El Gamal, and Salehi uses the source correlations to achieve a collaborative gain and create channel input distributions adapted to the channel \cite{cover1980multiple}.  
To that end, Nazer and Gastpar have devised designs for distributed computing over multiple access channels  \cite{nazer2007computation}, and {\em structured coding} for Gaussian networks \cite{nazer2007lattice}. 
Distributed computing of {\em functions of structured sources} has been studied in \cite{malak2023Structured} while benefiting from low-rate side information provided by a helper node. Focusing on the case where the sources are jointly distributed according to a very general mixture model, an achievable coding scheme has been provided to substantially reduce the communication cost of distributed computing by exploiting the nature of the joint distribution of the sources, the side information, as well as the symmetry enjoyed by the desired functions.

\subsection{Overview and Contributions}
\label{sec:contributions}

We focus on distributed computing of a function of two jointly distributed finite alphabet sources at a user.  
We pose this problem as an edge-weighted characteristic graph (EWCG) compression problem. To do so, we build a bipartite graph\footnote{Important classes of bipartite networks are the collaboration network 
and the opinion network. 
They are significant in information and economic systems, social networks, opinion networks and recommendation systems \cite{zhou2007bipartite}.} 
where two disjoint parts denote the individual source outcomes, and the edges capture the joint source distribution.  

Our main contributions can be summarized as follows:

\begin{itemize}[leftmargin=*]
\item {\bf \emph{Edge-weighted $b$-fold compression.}} 
We propose an EWCG encoding scheme to provide low-complexity compression for computing, where we describe the weights by the joint source distribution and the function. 
An EWCG is a fractionally colored characteristic graph built by each source as an edge-weighted projection of the bipartite graph (Sect.~\ref{sec:model_defns}). 
To capture the unequal edge weights, the source devises $b$ characteristic graphs (one characteristic graph per source coordinate, see App.~\ref{preliminary}), where the edge weights in EWCG are quantized across these graphs (which we will detail via Example~\ref{uniform_weighted_example}).

In an EWCG, a vertex captures a {\em $b$-fold, i.e., vector-valued, source value}, and is given $b$ colors out of $a$ available colors, where $b$ captures the quantization depth of each source. 
The edge weights are used to determine $a$, $b$, and the overlap of colorings for any vertex pair (Sect.~\ref{sec:results}), upon which each source establishes and encodes the vertex colorings of its EWCG.

\item {\bf \emph{Edge-weighted fractional chromatic entropy.}} 
The fractional chromatic number $\chi_f$ -- given by the limit in (\ref{fractional_graph_coloring}) in App.~\ref{preliminary} -- determines the communication complexity when the edges have unit weights. 
Using OR power graphs, we can exploit the gains in complexity through fractional coloring as the blocklength $n$ tends to infinity \cite[Ch. 3]{scheinerman2011fractional}. 
To that end, we generalize the definition of $\chi_f$ via EWCGs to provide a lower communication complexity (Sect.~\ref{sec:results}).

\item {\bf \emph{Joint quantization and distributed functional compression via  
EWCGs.}} In the edge-weighted fractional coloring of {\em vector-valued  
sources}, $b$ is the quantization depth. 
The encoding rates for EWCGs are lower versus traditional or fractional coloring of graphs 
because the higher the value of $b$ 
is, the more refined  
the weights in an EWCG are, enabling a lower rate of compression per source coordinate. 
We characterize in (\ref{distance_between_vertices}) the number of disjoint colors between two vertices of an EWCG. 
We provide in Theorem \ref{fractional_graph_entropy_edge_weighted_graph} (Sect.~\ref{sec:results}) the encoding rate for a $b$-fold fractional coloring of EWCGs.

\item {\bf \emph{Numerical experiments.}} Contrasting it with the existing techniques via an example, EWCG exhibits significant savings in communication complexity by taking into account the structures of the sources (via the Slepian-Wolf theorem \cite{SlepWolf1973}) and the function (via the edge weights).
\end{itemize}

\subsection{Notation} 
For a random variable $X$ with a finite alphabet $\mathcal{X}$, $P_X$ denotes its probability mass function (PMF). Similarly, for variables $X_1$ and $X_2$, $P_{X_1,X_2}$ denotes the joint PMF of finite alphabet $X_1$ and $X_2$. 
We denote the probability of an event $A$ by $\mathbb{P}(A)$. Let the entropy function of a PMF ${\bf p}$ be $h({\bf p})=-\sum\nolimits_i p_i\log p_i$ where the logarithm is in base $2$, $h(p)$ be the binary entropy function with parameter $p$, and $H(X)=\mathbb{E}[-\log P_{X}(X)]$ be the Shannon entropy of $X$ drawn from  $P_{X}$. We denote by ${\bf X}_1^n=X_{11},X_{12},\dots, X_{1n}\in \mathcal{X}_1^n$ the length $n$ sequence of $X_1$ sampled from an $n$-fold finite alphabet $\mathcal{X}_1^n$. 
We let $[N]=\{1,2,\dots,N\}$, $N\in \mathbb{Z}^+$.

\section{Model and Problem Statement}
\label{sec:model_defns}

We pose the problem of distributed computation of a bivariate 
function $f(X_1,\,X_2)$ of the two sources $X_1$ and $X_2$ as a \emph{compression problem for the edge-weighted projections of a bipartite graph model that captures $P_{X_1,X_2}$}. 
For this partially distributed setting, we will devise  
an encoding scheme for EWCGs and  quantify the sum rate for computing $f(X_1,\,X_2)$, by exploiting the notions of characteristic graphs and their entropy \cite{Kor73,OR01,AO96,feizi2014network} and the concept of bipartite graph projection. 
For $M$ sources and computing $f(X_1,\, X_2,\dots,X_M)$, we can exploit the notion of $M-$partite graphs, which is left as future work.

\subsection{Bipartite Graph Representation}
\label{sec:bipartite_graph}

We construct a bipartite graph representation $G_f=(\mathcal{X}_1,\mathcal{X}_2,E)$ to compute the function $f(X_1,\,X_2)$, whose partition has the parts $\mathcal{X}_1$ and $\mathcal{X}_2$, which correspond to the set of realizations of the sources $X_1$ and $X_2$, respectively, and $E$ denotes the set of edges of $G_f$. The bipartite graph $G_f$ is derived from the joint distribution $P_{X_1,\,X_2}$, and $E$ captures the correlation between $X_1$ and $X_2$. More specifically, $G_f$ has the following properties:
\begin{enumerate}[leftmargin=*]
    \item The set of vertices $\mathcal{X}_1$ and $\mathcal{X}_2$ that partition $G_f$ are disjoint and correspond to the set of source realizations, i.e., the alphabets $\mathcal{X}_1$ and $\mathcal{X}_2$, respectively.
    \item $G_f$ is a balanced bipartite graph with $|\mathcal{X}_1|=|\mathcal{X}_2|$, i.e., the two subsets of vertices have the same cardinality.
    \item There is an edge between vertices $u_k\in \mathcal{X}_1$ and $v_l\in \mathcal{X}_2$, i.e., $(u_k,v_l)\in E$, if and only if $\mathbb{P}(X_1= u_k , X_2= v_l) >0$.
    \item If $u_k \in \mathcal{X}_1$ and $v_l\in \mathcal{X}_2$ are connected, i.e., $(u_k,v_l)\in E$, and  
    $(v_l,u_k)\in E$, then the symmetry of the edges does not imply that both edges yield the same function outcome. 
\end{enumerate}

If $G_f$ is complete, it has $|\mathcal{X}_1|\cdot|\mathcal{X}_2|$ edges and the number of distinct function outcomes is determined by the structure of $f(X_1,\,X_2)$. 
On the other hand, if $G_f$ is not connected, it may have more than one bipartition \cite{chartrand2019chromatic}. In that case,  encoding of $f(X_1,\,X_2)$ is facilitated upon the extraction of the bipartition information.  
We note that the sources do not have 
the full knowledge of $E$, as determined by $P_{X_1,X_2}$, but only the weights jointly determined by $P_{X_1,X_2}$ and $f(X_1,\,X_2)$. 
We assume that the edge weights are available,  
and can be learned via feedback, the study of which is left as future work.

\subsection{Weighted Bipartite Graphs through Projections of $G_f$}
\label{sec:weighted_graph}

Source one $X_1$ observes a weighted projection of $G_f$ onto a graph -- the $X_1$ projection of $G_f$ -- denoted by $\GXone$, and similarly for source two. For the EWCG of source one, given by $\GXone$, the edge weight between $u_{k_1},\, u_{k_2}\in \mathcal{X}_1$  of $\GXone$, denoted by $w(u_{k_1},u_{k_2})$, is set to be the weighted number of common neighbors in $X_2$. 
Hence, the notion $\GXone$ generalizes the concept of the characteristic graph $G_{X_1}$ detailed in App.~\ref{preliminary}. 
In this paper, 
we determine $\{w(u_{k_1},u_{k_2}),\,u_{k_1}, u_{k_2}\in \mathcal{X}_1\}$ as 
\begin{align}
\label{weighting}
w(u_{k_1},u_{k_2})
=\sum\limits_{ \underset{\prod_{k\in\{k_1,k_2\}}P_{X_1,X_2}(u_{k},v_l)>0}{v_l\in \mathcal{X}_2:\,f(u_{k_1},v_l)\neq f(u_{k_2},v_l)}} \hspace{-1.5cm} P_{X_1,X_2}([u_{k_1},u_{k_2}],v_l)\ ,
\end{align}
where $P_{X_1,X_2}([u_{k_1},u_{k_2}],v_l)=\sum_{k\in\{k_1,k_2\}}P_{X_1,X_2}(u_{k},v_l)$. 
The idea is similar for determining $w(v_{l_1},v_{l_2})$ 
of $\GXtwo$.

Similarly, towards realizing the limits of compression,  
for the $n$-th power graph of the EWCG 
$\GXone$, namely $\GXonen$, can be determined using 
the edge weight between the vertices ${\bf u}^n_{i}, {\bf u}^n_{j}\in \mathcal{X}_1^n$ of $\GXonen$, which 
is given as 
\begin{align}
w({\bf u}^n_{i},{\bf u}^n_{j})=\sum\limits_{{\bf v}^n_l\in \mathcal{X}_2^n:\, \underset{\prod_{k\in\{i,j\}}P_{{\bf X}^n_1,{\bf X}^n_2}({\bf u}^n_{k},{\bf v}^n_l)>0}{f({\bf u}^n_{i},{\bf v}^n_l)\neq f({\bf u}^n_{j},{\bf v}^n_l)}} \hspace{-1.5cm} P_{{\bf X}^n_1,{\bf X}^n_2}([{\bf u}^n_{i},{\bf u}^n_{j}],{\bf v}^n_l)\ .\nonumber 
\end{align}

We can note that for the standard construction $G_{X_1}$ of $X_1$ \cite{shannon1956zero}, \cite{Kor73}, as detailed in App. \ref{preliminary}, the edge weights satisfy
\begin{multline}
w(u_{k_1},u_{k_2})=\\
1_{\big|\big\{v_l\in \mathcal{X}_2\,:\,\prod_{k\in\{k_1,k_2\}}P_{X_1,X_2}(u_{k},v_l)>0,\, f(u_{k_1},v_l)\neq f(u_{k_2},v_l)\big\}\big|>0}\ .\nonumber
\end{multline}

In distributed compression, exploiting the notion of jointly typical sequences, it is possible for the user to estimate the number of ${\bf X}_2^n$ sequences jointly typical with ${\bf X}_1^n$ given ${\bf X}_1^n$. Hence, as a simplification of this paper's model in (\ref{weighting}), while still generalizing $G_{X_1}$, the weight $w(u_{k_1},u_{k_2})$ for  $u_{k_1},\,u_{k_2}\in\mathcal{X}_1$ can be set as the number of common neighbors in $\mathcal{X}_2$:
\begin{multline}
\label{classical_weight_rule}
w(u_{k_1},u_{k_2})= \\
\sum\limits_{v_l\in \mathcal{X}_2} 1_{\prod_{k\in\{k_1,k_2\}}P_{X_1,X_2}(u_{k},v_l)>0,\,f(u_{k_1},v_l)\neq f(u_{k_2},v_l)}\ .
\end{multline}
The edge weights in (\ref{weighting}) affect the quantization of the source outcomes through a $b$-tuple of graphs, which we detail next.

\section{Main Results}
\label{sec:results}

In this section, we provide an achievable encoding and decoding approach for asymptotically lossless distributed computation of $f(X_1,\,X_2)$, which is based on projecting the bipartite graph $G_f$ onto EWCGs and compressing the EWCGs.

\subsection{Valid Colorings of Edge-Weighted Graphs}
\label{sec:validcoloring_weighted_graphs}

In traditional coloring of an unweighted graph $G_{X_1}$, we note that given a pair of vertices $u_{k_1},\, u_{k_2}\in \mathcal{X}_1$ such that $w(u_{k_1}, u_{k_2})=0$, it implies that the two vertices can have identical colors $c_{G_{X_1}}(u_{k_1})=c_{G_{X_1}}(u_{k_2})$. On the other hand, $w(u_{k_1}, u_{k_2})>0$ implies $c_{G_{X_1}}(u_{k_1})\neq c_{G_{X_1}}(u_{k_2})$.

In fractional coloring of EWCGs, prior to a valid coloring of vertices of $\GXone$ and $\GXtwo$, we normalize each weight in~(\ref{weighting}) by $\max\, \{w(u_{k_1}, u_{k_2}),\,u_{k_1}, u_{k_2}\in \mathcal{X}_1\}$, and similarly for $\{w(v_{l_1},v_{l_2}),\,v_{l_1}, v_{l_2}\in \mathcal{X}_2\}$ of $\GXtwo$. 

We next let $c^f_{\GXone}(u_{k_1})$, $u_{k_1}\in \mathcal{X}_1$ be a valid fractional coloring with a $b$-fold coloring, where $u_{k_1}$ is assigned $b$ colors out of $a$ available colors. Note that the distance between colors $c^f_{\GXone}(u_{k_1})$ and $c^f_{\GXone}(u_{k_2})$, i.e., ${\rm dist}(c^f_{\GXone}(u_{k_1}),c^f_{\GXone}(u_{k_2}))$, is an increasing function of $w(u_{k_1}, u_{k_2})$ \cite{toivonen2011compression}. 
To that end, we stretch Defns. \ref{valid_ab_coloring} and \ref{fractional_chromatic_number} in App. \ref{preliminary} of the standard $a:b$ coloring, and adopt the following model.  
As in traditional coloring, for a given $u_{k_1},\, u_{k_2}\in\mathcal{X}_1$, in the special case when $w(u_{k_1}, u_{k_2})=0$, then the $b$-fold colors  
$c^f_{\GXone}(u_{k_1})$ and $c^f_{\GXone}(u_{k_2})$ could be identical, i.e., ${\rm dist}(c^f_{\GXone}(u_{k_1}),c^f_{\GXone}(u_{k_2}))=0$. On the other hand, when $w(u_{k_1}, u_{k_2})=1$, then the $b$-fold colors  
$c^f_{\GXone}(u_{k_1})$ and $c^f_{\GXone}(u_{k_2})$ can have no overlaps, i.e., ${\rm dist}(c^f_{\GXone}(u_{k_1}),c^f_{\GXone}(u_{k_2}))=b$. 
More generally, a valid $a:b$ coloring of the EWCG $\GXone$ is such that given $w(u_{k_1}, u_{k_2})$, the minimum number of disjoint colors between $u_{k_1}$ and $u_{k_2}$ of $\GXone$ is
\begin{align}
\label{distance_between_vertices}
{\rm dist}(c^f_{\GXone}(u_{k_1}),c^f_{\GXone}(u_{k_2}))=\lceil w(u_{k_1}, u_{k_2}) \cdot b\rceil \ ,
\end{align}
meaning that if $w(u_{k_1}, u_{k_2}) \in \Big(\frac{b-(k+1)}{b}, \frac{b-k}{b}\Big]$ for $k\in \{0\} \cup [b-1]$, then vertices $u_{k_1}$ and $u_{k_2}$ are assigned $b-k$ distinct colors, and only if $w(u_{k_1}, u_{k_2})=0$ they are assigned exactly the same $b$ colors. 
We note that the number of different colors between two vertices of $\GXone$ changes as a function of the edge weight, as given in (\ref{distance_between_vertices}). 
The neighboring vertices in $G^w_{X_{1i}}$ have at least one different color, and the endpoints of edges with large weights have a higher number of disjoint colors. 
Clearly, this coloring scheme generalizes the notion of fractional chromatic number (Defn. \ref{fractional_chromatic_number} in App. \ref{preliminary}).

We next expand $\GXone$ into a $b$-tuple of graphs represented by $G^w_{X_1(S)}=\{G^w_{X_{1i}}:i\in S,\, |S|=b\}$, where 
$G^w_{X_{1i}}$, $i\in S$ is an $i$-th replica of $\GXone$. We jointly color the set of graphs $G^w_{X_1(S)}$ such that $c_{{G_{X_1(S)}}}(X_1(S))=\{c_{{G^w_{X_{1i}}}}(X_{1i}):i\in S,\, |S|=b\}$ and $w(u_{k_1},u_{k_2})=\frac{1}{b}\sum\limits_{i\in S} w_i(u_{k_1},u_{k_2})$ is split such that 
\begin{align}
\label{weighting_b_folded}
&w_i(u_{k_1},u_{k_2})\\
&=\min\left\{1,\, \max\Big\{b\cdot  w(u_{k_1},u_{k_2})-b\sum\limits_{i'=1}^{i-1} w_{i'}(u_{k_1},u_{k_2}),\, 0\Big\}\right\} \ \nonumber
\end{align}
denotes the weight between the vertices 
$u_{k_1}$ and $u_{k_2}$ of $G^w_{X_{1i}}$, $i\in S$, i.e., the $i$-th replica of $\GXone$. 
Note that (\ref{weighting_b_folded}) yields a sequence of monotone decreasing edge weights $w_i(u_{k_1},u_{k_2})$ for $i\in S$  
that jointly determine the traditional colorings for the set of graphs $G^w_{X_1(S)}$.  
In Fig. \ref{edge_weighted_graphs_btwo_2ndcoloringscheme}, we show a joint coloring for an example $|S|=2$-tuple EWCG. We will detail this example in Sect. \ref{sec:example} to indicate the achievable gains in compression.

We next explore the fundamental rate limits for distributed computing of $f(X_1,\,X_2)$, by exploiting the notions of characteristic graph entropy, and EWCGs, where we determine the weights according to (\ref{weighting_b_folded}), following the bipartite projection scheme. To that end, we next detail encoding and decoding of $\GXone$ (using the edge weights) for asymptotically lossless compression of $f({\bf X}_1^n,\,{\bf X}_2^n)$.

\subsection{An Achievable Coloring Scheme for Edge-Weighted Graphs}
\label{sec:achievable_coloring_weighted_graphs}

In this part, we detail the encoding and decoding principle of EWCGs for distributed computing of $f(X_1,\,X_2)$. 
We next describe the encoding of $b$-fold colors. 
Note that the computation of $f$ is lossless independent of the value of $b\in\mathbb{Z}^+$. 

\paragraph{Encoding}
\label{sec:encoding}
Given $G_f$, the encoding phase includes the projections of $G_f$ onto $\GXone$ and $\GXtwo$ by determining the corresponding edge weights using (\ref{weighting}) followed by their normalization. 
Each source then builds a $b$-tuple of characteristic graphs, namely 
$G^w_{X_1(S)}$ and $G^w_{X_2(S)}$, respectively, for $|S|=b$. 
The sources can then compress 
their weighted graphs 
asymptotically at rates $H^f_{\GXone}(X_1)$ and $H^f_{\GXone}(X_2)$, where we next give the conditional {\FCGE} of the EWCG $\GXone$.

\begin{figure*}[t!]
\centering
\includegraphics[width=\textwidth]{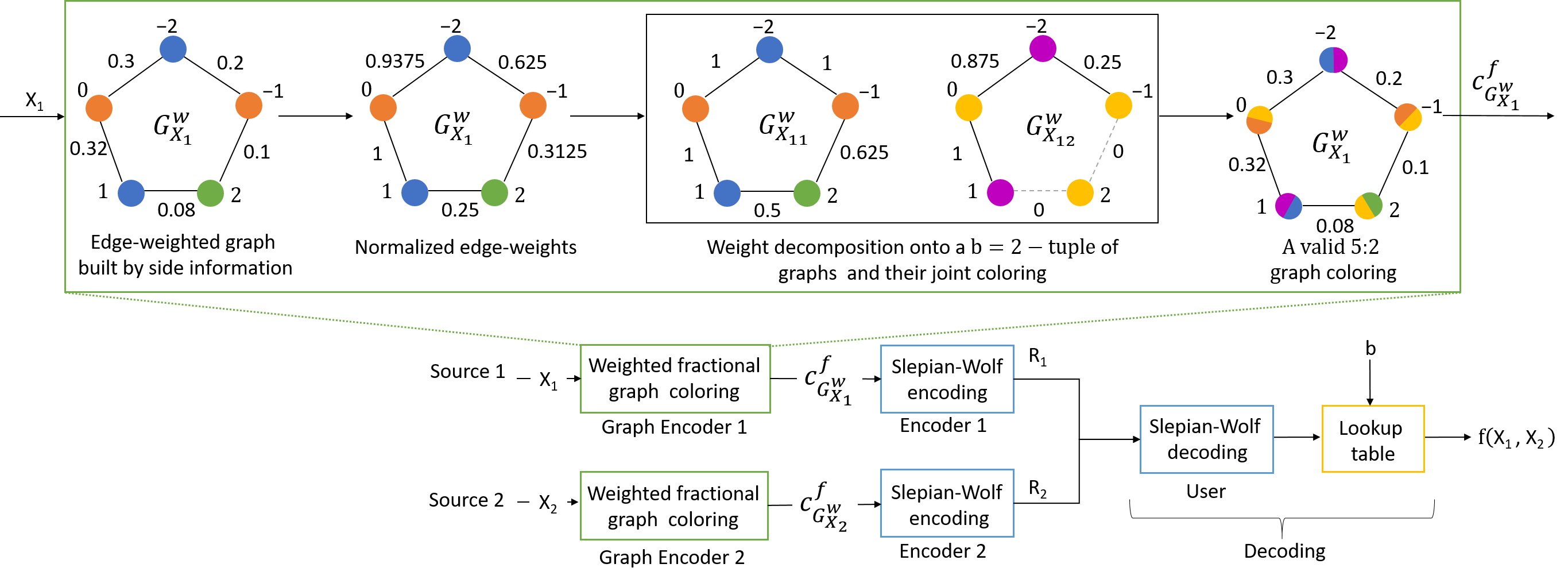}
\caption{Distributed computation of $f(X_1,\,X_2)$: an end-to-end multi-fold encoding and decoding scheme for EWCGs. 
The encoding phase consists of determining the EWCG tuples and their colorings, followed by Slepian-Wolf encoding on the $b$-fold colors. Decoding relies on recovering the $b$-fold colors using Slepian-Wolf decoding followed by recovering the outcomes using a look-up table. 
We note that $P_{X_1}=(0.2,0.15,0.32,0.24,0.09)$, and the edge weights are given in Example \ref{uniform_weighted_example}. 
In the bottom figure, the graph encoder for  
each source is independent -- output is a $b$-tuple color sequence -- with Slepian-Wolf encoding.  
In this example, the user uses $4$ fractional colors received ($b=2$ from each source per transmission) to reconstruct the function outcome.}
\label{edge_weighted_graphs_btwo_2ndcoloringscheme}
\end{figure*}

\begin{theo}
\label{fractional_graph_entropy_edge_weighted_graph}
The {\FCGE} of $\GXone$ is equal to 
\begin{multline}
\label{weight_compression_rate_asymptotic}
H^f_{\GXone}(X_1\, \vert\, X_2)=\lim\limits_{n\to\infty} \frac{1}{n} \inf\limits_{b} \frac{1}{b} \min\limits_{c^f_{\GXonen}} \{H(c^f_{\GXonen}({\bf X}_1))\,:\\
c^f_{\GXonen}({\bf X}_1) \mbox{ is a valid } a:b \mbox{ coloring of } \GXonen \, \vert\, {\bf X}_2\} \ ,
\end{multline}
where $c^f_{\GXonen}({\bf X}_1)$ is a fractional coloring variable for $\GXonen$ with an $a:b$ coloring of each vertex of $\GXonen$. 
\end{theo}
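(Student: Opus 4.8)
The plan is to prove (\ref{weight_compression_rate_asymptotic}) by an achievability (coding) argument, a matching converse, and a subadditivity argument guaranteeing that the limit in $n$ exists and may be pulled outside the $\inf$ over $b$. The organizing idea is to reduce the edge-weighted problem, for each fixed $b$, to the ordinary conditional characteristic-graph-entropy setting of \cite{Kor73,OR01,feizi2014network}: the splitting rule (\ref{weighting_b_folded}) turns each fractional weight $w(u_{k_1},u_{k_2})$ into $b$ monotone layers $w_i(u_{k_1},u_{k_2})\in[0,1]$, where the $i$-th layer $G^w_{X_{1i}}$ is an ordinary characteristic graph with edge set $\{(u_{k_1},u_{k_2}):w_i(u_{k_1},u_{k_2})>0\}$, and a valid $a{:}b$ coloring in the sense of (\ref{distance_between_vertices}) is exactly a $b$-tuple of ordinary colorings of these layers. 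Under this reduction $\tfrac1b H(c^f_{\GXonen}({\bf X}_1)\mid{\bf X}_2)$ is the normalized sum of the $b$ per-layer coloring entropies, and (\ref{weight_compression_rate_asymptotic}) becomes the natural $b$-fold generalization of the fractional graph entropy in (\ref{fractional_graph_coloring}).

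For achievability I would run the end-to-end scheme of Fig.~\ref{edge_weighted_graphs_btwo_2ndcoloringscheme}: source one builds $\GXonen$, selects a valid $a{:}b$ coloring $c^f_{\GXonen}$ attaining the minimizing entropy in (\ref{weight_compression_rate_asymptotic}) up to $\epsilon$, and Slepian--Wolf encodes the color sequence at rate $\tfrac1n H(c^f_{\GXonen}({\bf X}_1)\mid{\bf X}_2)+\epsilon$, the conditional entropy figuring in (\ref{weight_compression_rate_asymptotic}); source two does the symmetric thing. The decoder reconstructs both color tuples by joint-typicality (minimum-entropy) decoding and reads $f$ off a look-up table indexed by the color pair. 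The one fact to verify is that this table is well defined, i.e.\ the coloring is informative for $f$: whenever $f({\bf u}^n_i,{\bf v}^n_l)\neq f({\bf u}^n_j,{\bf v}^n_l)$ for some jointly feasible ${\bf v}^n_l$, the power-graph weighting gives $w({\bf u}^n_i,{\bf u}^n_j)>0$, hence $\lceil w({\bf u}^n_i,{\bf u}^n_j)\,b\rceil\ge 1$ and by (\ref{distance_between_vertices}) the color sets of ${\bf u}^n_i$ and ${\bf u}^n_j$ are not identical; thus the two colorings jointly disambiguate every function value occurring with positive probability and the error probability vanishes by the Slepian--Wolf analysis. Dividing by $b$, taking $\inf_b$, and letting $n\to\infty$ makes the right-hand side of (\ref{weight_compression_rate_asymptotic}) achievable.

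For the converse I would adapt the Witsenhausen/Körner induced-coloring argument: any sequence of codes computing $f$ with vanishing error partitions $\mathcal{X}_1^n$ according to codewords, and two sequences in one cell must agree on $f$ against every ${\bf v}^n_l$ jointly typical with them, which forces $w({\bf u}^n_i,{\bf u}^n_j)$ to vanish up to a probabilistically negligible slack — i.e.\ the induced color map is an asymptotically valid $a{:}b$ coloring, satisfying (\ref{distance_between_vertices}) with $k=b$ on all but a vanishing-weight set of pairs. The standard entropy lower bound for Slepian--Wolf then shows the per-source rate cannot fall below $\tfrac1n H(c^f_{\GXonen}({\bf X}_1)\mid{\bf X}_2)$, so no valid scheme beats (\ref{weight_compression_rate_asymptotic}). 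Existence of $\lim_n$ and the legitimacy of moving $\inf_b$ outside follow from a Fekete-type super-/sub-additivity of $\log$ of the relevant $a{:}b$-covering numbers over OR-products of $\GXonen$ — the same mechanism underlying the fractional chromatic number in (\ref{fractional_graph_coloring}) and \cite[Ch.~3]{scheinerman2011fractional} — applied layer-by-layer to the $b$-tuple.

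The hard part will be the converse, specifically reconciling the probabilistic definition (\ref{weighting}) of the weights (a sum over \emph{jointly feasible} ${\bf v}^n_l$) with typicality: one must show that ``$w=0$ on the typical set'' is simultaneously what validity requires and what any good code supplies, while controlling the atypical contribution so that it neither breaks correctness nor inflates the rate. A secondary technical point, needed even to make achievability non-vacuous, is a monotonicity lemma for the layers produced by (\ref{weighting_b_folded}): one must verify that the $w_i(u_{k_1},u_{k_2})$ are non-increasing in $i$ and that they realize exactly $\lceil w\cdot b\rceil$ disjoint colors, so that the reduction to the unweighted $b$-tuple framework is exact and the $\min$ over valid $a{:}b$ colorings in (\ref{weight_compression_rate_asymptotic}) is over a nonempty set for every $b$.
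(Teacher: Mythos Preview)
Your proposal takes a genuinely different route from the paper. The paper's proof is a two-line structural argument that mirrors Proposition~\ref{FCE_characteristic_graph_n_limit} verbatim with $\GXone$ in place of $G_{X_1}$: Han's subset-entropy monotonicity \cite[Ch.~16.5]{cover2012elements} is invoked to justify the $\inf_b\tfrac1b$ (giving the weighted analogue of (\ref{chromatic_fractional})), and K\"orner's chromatic-to-graph-entropy relation \cite{Kor73} is invoked to pass to $\lim_n\tfrac1n$ (giving the weighted analogue of (\ref{chromatic_vs_characteristic_fractional})). There is no coding-theoretic achievability or converse in the paper's proof; the theorem is a structural formula for the quantity $H^f_{\GXone}(X_1\vert X_2)$, not an operational coding theorem, and the paper treats the encoding/decoding scheme separately in Section~\ref{sec:achievable_coloring_weighted_graphs}.

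Your plan instead reads the theorem as a full coding theorem and proposes achievability plus converse. The achievability half is essentially the scheme the paper describes in Fig.~\ref{edge_weighted_graphs_btwo_2ndcoloringscheme} and is fine in spirit, though the paper does not fold it into the proof of Theorem~\ref{fractional_graph_entropy_edge_weighted_graph}. The converse half is not required by the statement, and the mechanism you sketch has a gap: a code that computes $f$ with vanishing error only forces the induced partition to separate pairs with $w>0$, i.e., it induces an ordinary ($b{=}1$) coloring of $\GXonen$. It does \emph{not} force the finer overlap structure $\lceil w\cdot b\rceil$ of (\ref{distance_between_vertices}) for general $b$, so the step ``the induced color map is an asymptotically valid $a{:}b$ coloring'' does not go through as written. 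The converse \emph{conclusion} still holds trivially (the $b{=}1$ term in the infimum already dominates), but that is not the argument you give, and in any case the paper is presenting $H^f_{\GXone}$ as an achievable rate for the weighted scheme, not as the fundamental limit for computing $f$.
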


\begin{proof}
A proof sketch is given in App. \ref{app:fractional_graph_entropy_edge_weighted_graph}.
\end{proof}

\paragraph{Decoding} 
\label{sec:decoding}
For lossless decoding, the user needs to be instructed on the joint PMF $P_{X_1,X_2}$, the desired function $f$, $b$, and the look-up table for recovering $f({\bf X}_1^n,\,{\bf X}_2^n)$ using the received fractional colorings of the $b$-tuple of graphs from each source. 
The user first performs minimum-entropy decoding on its received information \cite{csiszar2011information}. 
Via Slepian-Wolf decoding, it achieves the random sequences $c^f_{\GXonen}({\bf X}_1)$ and $c^f_{\GXtwon}({\bf X}_2)$ that model the $b$-fold color tuples.  
The user then uses a look-up table to compute $f({\bf X}_1^n,\,{\bf X}_2^n)$.

To demonstrate the procedure for encoding and decoding of an EWCG $\GXone$, determining the edge weights in (\ref{weighting_b_folded}), and sending a pair of $b$-tuples of coloring sequences for recovery of $f({\bf X}_1^n,\,{\bf X}^n_2)$ by the user in an asymptotically lossless manner, we next detail an end-to-end distributed computing example with a $b=2$-fold coloring of $\GXone$, which is shown in Fig. \ref{edge_weighted_graphs_btwo_2ndcoloringscheme}.

\subsection{An Example toward Edge-Weighted Encoding-Decoding }
\label{sec:example}

We present an example to illustrate how to build an EWCG and how to encode and decode the coloring, to obtain the desired function outcomes. 
Through this example, we also contrast the performance of our scheme with that of traditional graph coloring that does not exploit the weight information.

\begin{ex}\label{uniform_weighted_example}{\bf An EWCG and its chromatic entropy.} 
The source variables $X_1$ and $X_2$ share a common alphabet such that $\mathcal{X}=\{-2,\,-1,\,0,\,1,\,2\}$. The ordered marginal PMFs are $X_1\sim {\bf p}_1
=(0.2,0.15,0.32,0.24,0.09)$ and $X_2\sim {\bf p}_2=(0.2,0.3,0.32,0.08,0.1)$, and  
$P_{X_1,X_2}$ is given as follows:
\begin{align}
\label{JointProbabilityTable_weighted}
P_{X_1,\,X_2}=\begin{bmatrix}
0.1 & 0.1 & 0 & 0 & 0\\
0.1 & 0 & 0 & 0 & 0.05\\
0 & 0.2 & 0.12 & 0 & 0\\
0 & 0 & 0.2 & 0.04 & 0\\
0 & 0 & 0 & 0.04 & 0.05
\end{bmatrix}.  
\end{align} 
We note that the entropy of $X_1$ satisfies $H(X_1)=h(0.2,0.15,0.32,0.24,0.09)=2.2078<H(X_{1,u})=2.32$, with $X_{1,u}\sim P_{X_{1,u}}$, where $P_{X_{1,u}}$ is uniform over $\mathcal{X}$.

{\bf Unweighted scenario.}
Without taking into account the edge weights, the minimum entropy coloring of $G_{X_1}$ is given as $H(c_{G_{X_1}})=h(0.44,0.47,0.09)=1.35$. 
The entropy with a $5:2$ fractional coloring with $\chi_f(G_{X_1})=2.5$ satisfies
$\frac{1}{2}H(c^f_{G_{X_1}})=\frac{1}{2}h(0.22,0.235,0.205,0.145,0.195)=1.15$. 
Similarly, for the second power 
$G^2_{{\bf X}_1}$, with an $8:1$ coloring, and a PMF \cite{feizi2014network} 
\begin{align}
c_{G^2_{{\bf X}_1}}\sim
(&0.176,0.188,0.018,0.176,\nonumber\\
&0.188,0.036,0.036,0.182)\ ,\nonumber
\end{align}
we get 
$\frac{1}{2}H(c_{G^2_{{\bf X}_1}})=1.34$. For a $13:2$ 
coloring, 
%
%
$\frac{1}{4}H(c^f_{G^2_{{\bf X}_1}})=0.91$. 
For $X_{1,u}$ uniform, it holds that  $\frac{1}{2}H(c^f_{G_{X_{1,u}}})=\frac{1}{2}\log 5=1.16$, and for ${\bf X}_{1,u}$ uniform, $\frac{1}{4}H(c^f_{G^2_{{\bf X}_{1,u}}})=\frac{1}{4}\log 13=0.92$.

\begin{figure*}[t!]
\centering
\includegraphics[width=\textwidth]{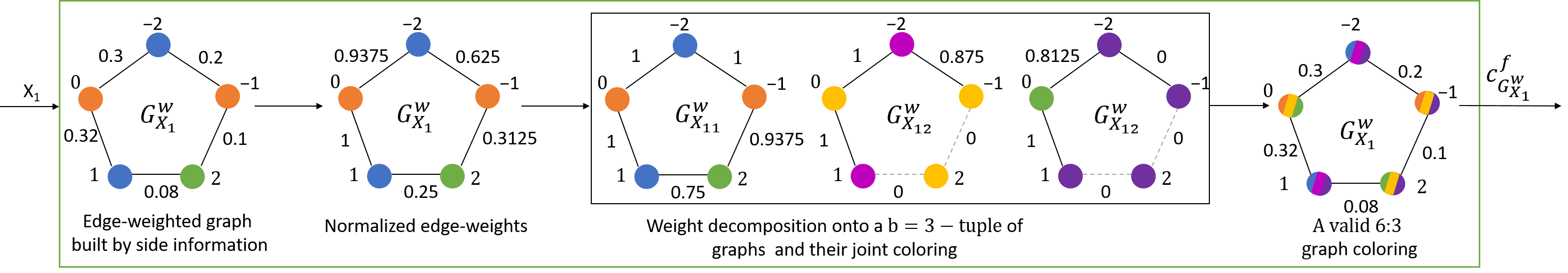}
\caption{A fractional coloring scheme for distributed computation of $f(X_1,\,X_2)$ with a $6:3$ coloring.}
\label{edge_weighted_graphs_bthree}
\end{figure*}
{\bf Weighted scenario.} 
We next take into account the edge weights. 
Using (\ref{weighting}), the edge weights are $w(-2,-1)=0.2$, $w(-2,0)=0.3$, $w(0,1)=0.32$, $w(1,2)=0.08$, and $w(-1,2)=0.1$. Note that for this specific example, $W\sim {\bf p}_2$.

We next decompose $\GXone$ into $b=2$ graphs, as shown in Fig. \ref{edge_weighted_graphs_btwo_2ndcoloringscheme} (top row).   
Normalizing the edge weights to set the maximum weight to be one, and then using (\ref{weighting_b_folded}), the weights are $w_1(-2,-1)=w_1(-2,0)=w_1(0,1)=1$, $w_1(1,2)=0.5$, and $w_1(-1,2)=0.625$ for  $G^w_{X_{11}}$, and 
$w_2(-2,-1)=0.25$, $w_2(-2,0)=0.875$, $w_2(0,1)=1$, and $w_2(1,2)=w_2(-1,2)=0$ for
$G^w_{X_{12}}$. 
This yields a valid $5:2$ coloring of $G^w_{X_1(S)}$ for $|S|=2$, 
as also shown in the top row.

Using the joint coloring information of $G^w_{X_1(S)}$, i.e., for $G^w_{X_{11}}$ and $G^w_{X_{12}}$, 
the color PMF for the $5:2$ fractional coloring of $\GXone$ for the set of ordered colors $\{c_1=Blue, c_2=Orange, c_3=Green, c_4=Purple, c_5=Yellow\}$ satisfies 
\begin{small}
\begin{align}
\label{marginal_colors_n_one_v2}
P_{c^f_{\GXone}}(c_1)&=\frac{1}{2}({\bf p}_1(-2)+{\bf p}_1(1))=\frac{0.44}{2}=0.22\ ,\nonumber\\ 
P_{c^f_{\GXone}}(c_2)&=\frac{1}{2}({\bf p}_1(-1)+{\bf p}_1(0))=\frac{0.47}{2}=0.235\ ,\nonumber\\ 
P_{c^f_{\GXone}}(c_3)&=\frac{1}{2}{\bf p}_1(2)=0.045\ ,\nonumber\\ 
P_{c^f_{\GXone}}(c_4)&=\frac{1}{2}({\bf p}_1(-2)+{\bf p}_1(1))=\frac{0.44}{2}=0.22\ ,\nonumber\\  P_{c^f_{\GXone}}(c_5)&=\frac{1}{2}({\bf p}_1(-1)+{\bf p}_1(0)+{\bf p}_1(2))=\frac{0.56}{2}=0.28\ ,
\end{align}
\end{small}
which yields from (\ref{weight_compression_rate_asymptotic}) that $\frac{1}{2}H(c^f_{\GXone})=1.08< \frac{1}{2}H(c^f_{G_{X_1}})=1.15<H(c_{G_{X_1}})=1.35$. 
Hence, for $b=2$, capturing the edge weights yields a saving of $\% 16$ over traditional coloring and does not offer enhancement over standard fractional coloring that does not capture the weights.

For the same example, with $b=3$ and with the inclusion of a sixth color, where $c_6=Violet$, we can achieve a $6:3$ coloring as shown in Fig. \ref{edge_weighted_graphs_bthree}, and the coloring PMF is 
\begin{small}
\begin{align}
\label{marginal_colors_n_one_bthree_v2}
P_{c^f_{\GXone}}(c_1)&=P_{c^f_{\GXone}}(c_4)=\frac{1}{3}({\bf p}_1(-2)+{\bf p}_1(1))=\frac{0.44}{3}\ ,\nonumber\\ 
P_{c^f_{\GXone}}(c_2)&=\frac{1}{3}({\bf p}_1(-1)+{\bf p}_1(0))=\frac{0.47}{3}\ ,\nonumber\\ 
P_{c^f_{\GXone}}(c_3)&=\frac{1}{3}({\bf p}_1(2)+{\bf p}_1(0))=\frac{0.41}{3}\ ,\nonumber\\ 
P_{c^f_{\GXone}}(c_5)&=\frac{1}{3}({\bf p}_1(-1)+{\bf p}_1(0)+{\bf p}_1(2))=\frac{0.56}{3}\ ,\nonumber\\
P_{c^f_{\GXone}}(c_6)&=
\frac{1}{3}(1-{\bf p}_1(0))=\frac{0.68}{3}\ .
\end{align}
\end{small}
Then, a valid $6:3$ coloring of $\GXone$ yields $\frac{1}{3}H(c^f_{\GXone})=0.85$, providing a saving of $\% 37$ over traditional coloring. Hence, a larger $b$ can capture the edge weights more accurately.

We next consider the second power graph $\GXonetwo$. 
We note that $\chi_f(\GXone)=2.5$, and 
$\chi_f(\GXonetwo)=\chi_f^2(\GXone)=6.25$. Hence, a $12:2$ coloring is not possible for $n=2$. We show a valid $13:2$  
coloring of $\GXonetwo$ in Fig. 
\ref{Example2_nonuniform_second_power_graph_2ndcoloringscheme}, given the ordered set  
$\{c_1=Blue,c_2=Yellow,c_3=Green,c_4=Orange,c_5=Purple,c_6=LightBlue,c_7=Brown,c_8=Violet,c_9=BrickRed,c_{10}=DarkGreen,c_{11}=Black,c_{12}=Gray,c_{13}=Navy\}$. Its 
coloring PMF can be derived from that for $\GXone$ and can be shown to satisfy 
\begin{small}
\begin{align}
P_{c^f_{\GXonetwo}}(c_m)&=\frac{2}{5}P_{c^f_{\GXone}}(c_1)=
0.088\ ,\quad m\in \{1, \, 11, \, 12\} \ , 
\nonumber\\
P_{c^f_{\GXonetwo}}(c_2)&=\frac{2}{5}P_{c^f_{\GXone}}(c_5)=0.112\ , 
\nonumber\\
P_{c^f_{\GXonetwo}}(c_3)&=\frac{1}{5}(P_{c^f_{\GXone}}(c_3)+P_{c^f_{\GXone}}(c_2))=0.056\ , 
\nonumber\\
P_{c^f_{\GXonetwo}}(c_4)&=\frac{2}{5}P_{c^f_{\GXone}}(c_2)=0.094\ , 
\nonumber\\
P_{c^f_{\GXonetwo}}(c_5)&=\frac{1}{5}(P_{c^f_{\GXone}}(c_4)+P_{c^f_{\GXone}}(c_3))=0.053\ , 
\nonumber\\
P_{c^f_{\GXonetwo}}(c_6)&=\frac{1}{5}(P_{c^f_{\GXone}}(c_4)+P_{c^f_{\GXone}}(c_5))=0.1\ , 
\nonumber\\
P_{c^f_{\GXonetwo}}(c_7)&=\frac{1}{5}(P_{c^f_{\GXone}}(c_1)+P_{c^f_{\GXone}}(c_2))=0.091\ , 
\nonumber\\
P_{c^f_{\GXonetwo}}(c_8)&=\frac{2}{5}P_{c^f_{\GXone}}(c_3)=0.018, 
\nonumber\\
P_{c^f_{\GXonetwo}}(c_9)&=\frac{1}{5}(P_{c^f_{\GXone}}(c_3)+P_{c^f_{\GXone}}(c_5))=0.065\ , 
\nonumber\\
P_{c^f_{\GXonetwo}}(c_{10})&=\frac{1}{5}(P_{c^f_{\GXone}}(c_2)+P_{c^f_{\GXone}}(c_5))=0.103\ , 
\nonumber\\
P_{c^f_{\GXonetwo}}(c_{13})&=\frac{1}{5}P_{c^f_{\GXone}}(c_4)=0.044 \ , 
\nonumber
\end{align}
\end{small}
which yields from (\ref{weight_compression_rate_asymptotic}) that $\frac{1}{4}H(c^f_{\GXonetwo}({\bf X}_1))=0.9=\frac{1}{4}H(c^f_{G^2_{{\bf X}_1}})=0.91<\frac{1}{2}H(c_{G^2_{{\bf X}_1}})=1.34$. 
Hence, capturing the edge weights yields a saving of $\% 32$ over traditional coloring, and does not have much gain over the fractional coloring approach that does not capture the weights. Increasing $b$ allows us to capture the edge weights more accurately.

\begin{figure}[h!]
\centering
\includegraphics[width=0.65\columnwidth]{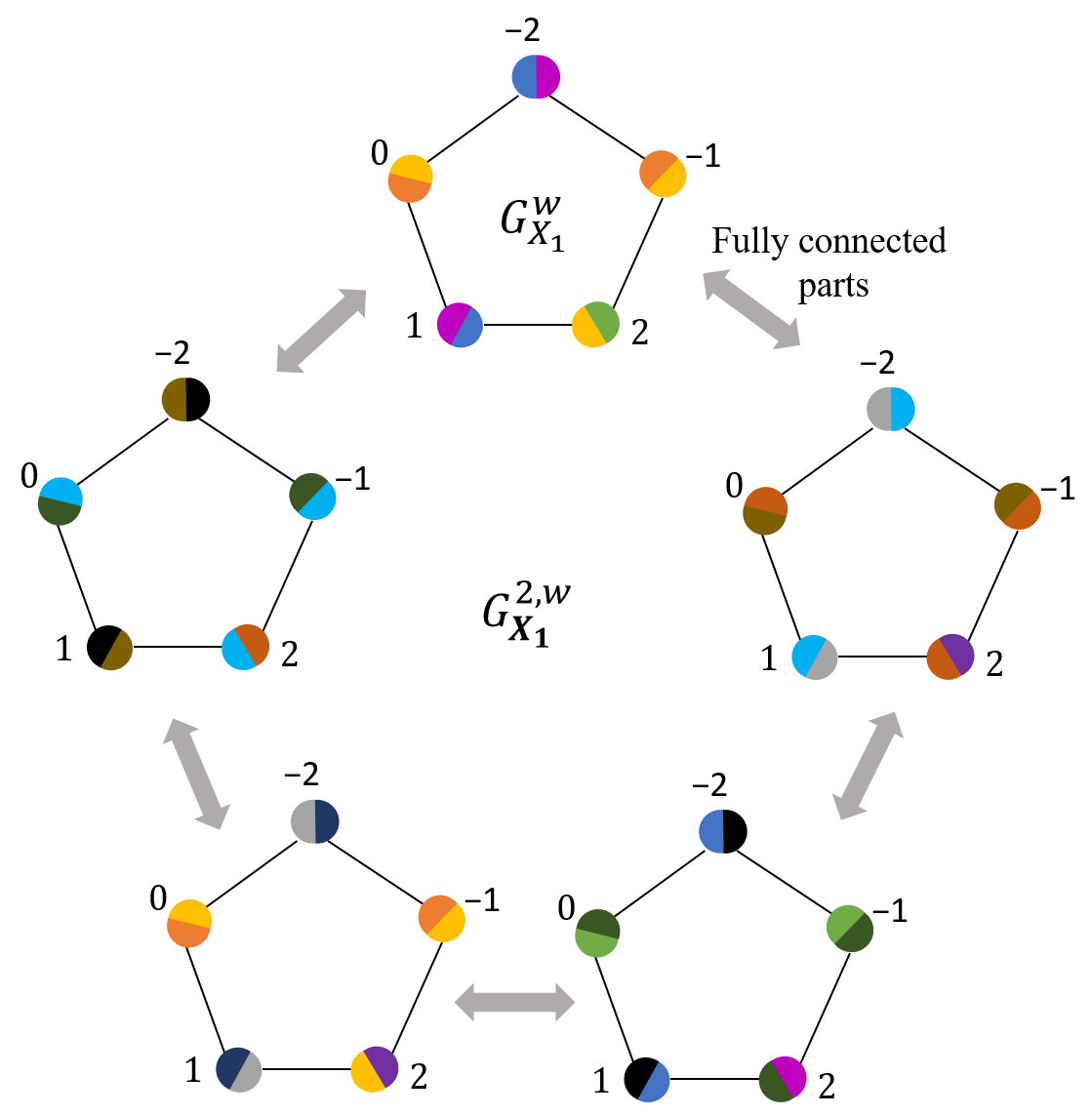}
\caption{A valid $13:2$ fractional coloring of $\GXonetwo$ for Example \ref{uniform_weighted_example}, where $\chi_f(\GXonetwo)=\chi_f^2(\GXone)=(2.5)^2=6.25$.}
\label{Example2_nonuniform_second_power_graph_2ndcoloringscheme}
\end{figure}

Similarly, we can determine the compression rate for general $n$. Exploiting \cite[Cor. 3.4.3]{scheinerman2011fractional}, $\chi(G_{{\bf X}_1}^n) \approx \chi_f^n(G_{X_1})$ as $n$ goes to infinity. Hence, we can derive the $n$-th power graph, $\GXonen$, along with its $a:b$ fractional coloring, $c^f_{\GXonen}({\bf X}_1)$. 
\end{ex}

From Example \ref{uniform_weighted_example}, as $b$ increases, we have a finer-grained quantization of the graph edge weights. 
As the skew of the edge weights increases, the efficiency in compressing the $b$-tuples of $\GXone$ increases (e.g., in Fig. \ref{edge_weighted_graphs_btwo_2ndcoloringscheme} some edges have relatively low weights, e.g., $w(1,2)=0.08$, and $w(-1,2)=0.1$, yielding a fewer number of total distinct colors between these two end vertices). 
As the value of $b$ increases, the edge weights will be captured with greater precision, leading to a more refined fractional coloring (more skewed) and a reduced total number of colors and smaller graph entropy 
$H^f_{\GXone}(X_1)$ given by (\ref{weight_compression_rate_asymptotic}).

When the {\em total bit budget for quantization and compression} is limited, there is a tradeoff between $b$ that determines the fold of coloring, and the complexity of encoding the characteristic graph. That is, the number of bits spent on quantizing the edge weights determines the attainable gains in compression.

\begin{spacing}{1}
\bibliographystyle{IEEEtran}
\bibliography{references}
\end{spacing}

\begin{appendices}
\section{Technical Preliminary}
\label{preliminary}
We consider a distributed  communication model with two sources and a user, where each source represents a \emph{projection of the bipartite graph} that describes the joint distribution of the sources. Hence, the sources have \emph{partial access to distributed source information}. 
The sources hold random variables $X_1$ and $X_2$, respectively. 
We assume that $X_1$ and $X_2$ model two statistically dependent i.i.d. finite alphabet source sequences with discrete alphabets $\mathcal{X}_1$ and $\mathcal{X}_2$, respectively, and they are jointly distributed according to $P_{X_1,X_2}$. Each source encodes its sequences independently via building an EWCG, which we detail in the following part, in App. \ref{sec:CharacteristicGraphs}. 

A source builds a characteristic graph and sends the coloring information (a $b$-fold coloring capturing the edge weights of the graph) to the user that performs {\em minimum-entropy decoding} on the received information. The user then uses a look-up table to compute the function by using the jointly distributed received color tuples. The user, exploiting the edge weights, computes a function $f(X_1,\,X_2)$ of $X_1$ and $X_2$ in an asymptotically lossless manner.  
To that end, our goal is to characterize an achievable rate region for this asymptotically lossless distributed computation problem.

To understand the fundamental limits of distributed computation, we next provide a primer on characteristic graphs, their traditional coloring and fractional coloring, and graph entropy.

\subsection{Source Characteristic Graphs and Their Vertex Colorings} 
\label{sec:CharacteristicGraphs}

Source one -- who does not have access to the outcomes of source two -- builds a characteristic graph $G_{X_1}=(V_{X_1},E_{X_1})$ for computing $f(X_1,\,X_2)$ to distinguish its outcomes that yield a different output for any value of $X_2$. Note that $V_{X_1}=\mathcal{X}_1$, and $E_{X_1}$ is determined as follows. Given two vertices $u_{k_1}, u_{k_2} \in V_{X_1}$ in $G_{X_1}$ such that $k_1\neq k_2$, if $\exists$ at least one vertex $v_{l}\in \mathcal{X}_2$ in $G_{X_2}$ such that $P_{X_1,X_2}(u_{k_1},v_{l})P_{X_1,X_2}(u_{k_2},v_{l})>0$ and the function satisfies $f(u_{k_1},v_{l})\neq f(u_{k_2},v_{l})$, then $(u_{k_1},u_{k_2})\in E_{X_1}$. Otherwise, $(u_{k_1},u_{k_2})\notin E_{X_1}$. Similarly, we can build $G_{X_2}$.

We let $c_{G_{X_1}}(X_1)$ be a valid vertex coloring of $G_{X_1}$, where a valid coloring is such that any two vertices of $G_{X_1}$ that share an edge are assigned distinct colors, i.e., edges of $G_{X_1}$ have unit weights. 
The joint PMF of $c_{G_{X_1}}$ and $c_{G_{X_2}}$ satisfies 
\begin{multline}
\label{color_class_distribution}
P_{c_{G_{X_1}},\,c_{G_{X_2}}}(c_{G_{X_1}}(u_{k}),\,c_{G_{X_2}}(v_{l}))\\
=
\sum\limits_{(u_k,v_l)\in \mathcal{J}(k,l)} P_{X_1,X_2}(x_1,x_2)\ ,
\end{multline}
where the sum over the joint coloring class 
$\mathcal{J}(k,l)=\{(u_{k},v_{l}),\,(x_1,x_2):c_{G_{X_1}}(u_k)=c_{G_{X_1}}(x_1),\,c_{G_{X_2}}(v_l)=c_{G_{X_2}}(x_2)\}$ for any valid $k$ and $l$, i.e., the collection of points $(u_{k},v_{l})$ whose coordinates have the same color. 
We assume that $P_{X_1,X_2}(x_1,x_2)>0$ for all $(x_1,x_2)$. Under this condition, we infer from \cite[Theorem 56]{feizi2014network} that maximal independent sets\footnote{
A maximal independent set (MIS) is an independent set that is not a subset of any other independent set.} (MISs) of $G_{X_m}$ are some non-overlapping fully-connected sets, and hence, for any $f(X_1,\,X_2)$, the minimum entropy coloring  
can be achieved in polynomial time by assigning different colors to the different MISs 
of $G_{X_m}$.

The condition $P_{X_1,X_2}(x_1,x_2)>0$ for all $(x_1,x_2)$ ensures for two points $(u_{k_1},v_{l_1})$ and $(u_{k_2},v_{l_2})$ that are in the same joint coloring class, they are fully-connected, i.e., the coloring connectivity condition (CCC), a necessary and sufficient condition for any achievable coding model that relies on colorings, is satisfied, and $f(u_{k_1},v_{l_1})=f(u_{k_2},v_{l_2})$.  
From \cite[Lemma 27]{feizi2014network}, for any two points $(u_{k_1},v_{l_1})$ and $(u_{k_2},v_{l_2})$ that are in the same joint coloring class, their function outcomes are the same if and only if the joint coloring class satisfies the CCC. 
We note in this case that the theorem of Slepian and Wolf can be applied to the joint PMF of colors given by (\ref{color_class_distribution}) to achieve distributed lossless computing in the asymptotic regime.

We also note that (\ref{color_class_distribution}) can be generalized to determine the colorings of the $n$-th power graphs $G_{{\bf X}_m}^n$ and their joint coloring classes $\mathcal{J}({\bf k}^n,{\bf l}^n)$ where ${\bf k}^n=k_1,k_2,\dots,k_n$ and ${\bf l}^n=l_1,l_2,\dots,l_n$. 
To capture the fundamental limits of asymptotically lossless compression for computation of the sequence of function outcomes $f({\bf X}_1^n,\,{\bf X}_2^n)=f(X_{11},\,X_{21}),f(X_{12},\,X_{22}),\dots,f(X_{1n},\,X_{2n})$, we similarly build the $n$-th power of $G_{X_1}$, i.e., $G_{{\bf X}_1}^n$. We note that $G_{{\bf X}_1}^n=(V_{X_1}^n,E_{X_1}^n)$ is an OR graph such that  $V_{X_1}^n=\mathcal{X}_1^n$ and if $(u_{k_1,i},u_{k_2,i})\in E_{X_1}$ for some coordinate $i\in [n]$, then $({\bf u}^n_{k_1},{\bf u}^n_{k_2})\in E_{X_1}^n$. 
The entropy of the characteristic graph $G_{X_1}$ is given by \cite{Kor73} 
\begin{align}
\label{chromatic_vs_characteristic_marginal}
H_{G_{X_1}}(X_1)=\lim_{n\to\infty} \min\limits_{c_{G_{{\bf X}_1}^n}}\frac{1}{n} H(c_{G_{{\bf X}_1}^n})\ , 
\end{align}
where the minimization is over the set of all valid colorings $c_{G_{{\bf X}_1}^n}({\bf X}_1)$ of $G_{{\bf X}_1}^n$. 
Similarly, conditional graph entropy \cite{OR01} and joint graph entropy satisfy the following relations:
\begin{align}
\label{chromatic_vs_characteristic}
&H_{G_{X_1}}(X_1\, \vert X_2)=\lim\limits_{n\to\infty}\,\, \min\limits_{c_{G^n_{{\bf X}_1}},\, c_{G^n_{{\bf X}_2}}} \frac{1}{n} H(c_{G^n_{{\bf X}_1}} \vert  c_{G^n_{{\bf X}_2}}) \ ,  \\  
&H_{G_{X_1},G_{X_2}}(X_1,\,X_2)=\lim\limits_{n\to\infty} \,\,\min\limits_{c_{G^n_{{\bf X}_1}},\, c_{G^n_{{\bf X}_2}}} \frac{1}{n} H(c_{G^n_{{\bf X}_1}} \, , c_{G^n_{{\bf X}_2}})\ .\nonumber
\end{align}

Exploiting (\ref{chromatic_vs_characteristic_marginal}) and (\ref{chromatic_vs_characteristic}), the rate region 
for distributed computation of $f(X_1,\,X_2)$ is given by the set of rates \cite{feizi2014network}:
\begin{align}
R_1&\geq H_{G_{X_1}}(X_1\, \vert X_2), \nonumber\\ 
R_2&\geq H_{G_{X_2}}(X_2\, \vert X_1), \nonumber\\ R_1+R_2&\geq H_{G_{X_1},G_{X_2}}(X_1,\, X_2)\ ,
\end{align}
noting that the chain rule of conditional entropy is not satisfied  
due to the Schur-concavity of minimum. 
From data processing, the rate region to compute $(X_1,\,X_2)$ in an asymptotically lossless manner, given by the coding theorem of Slepian-Wolf in \cite{SlepWolf1973}, is encompassed by the rate region for computing $f(X_1,\, X_2)$, given by \cite{feizi2014network}.

\subsection{Fractional Coloring of Characteristic Graphs}
\label{sec:FractionalCharacteristicGraphs}

Fractional graph coloring is a natural extension of traditional coloring such that in fractional coloring, each vertex is assigned a set of colors (versus one color only), and the adjacent vertices have disjoint sets of colors.

\begin{defi}
\label{valid_ab_coloring}
({\bf Scheinerman and Ullman~\cite{scheinerman2011fractional}.})
A valid $b$-fold coloring of $G=(V,E)$ is an assignment of sets of size $b$ to vertices $V$ such that adjacent vertices receive disjoint sets of colors. A valid $a:b$ coloring is a valid $b$-fold coloring out of $a$ available colors in total. 
\end{defi}

The $b$-fold chromatic number, $\chi_b(G)$, of graph $G=(V,E)$ represents the least $a$ such that an $a:b$ coloring exists.

\begin{defi}\label{fractional_chromatic_number}
({\bf Fractional chromatic number \cite{scheinerman2011fractional}.}) The {\FCN} is defined as
\begin{align}
\label{fractional_graph_coloring}
\chi_f(G):=\liminf\limits_{b\to\infty}\left\{\frac{\chi_b(G)}{b}\right\}=\inf\limits_{b} \frac{\chi_b(G)}{b}\ , 
\end{align}
where the existence of this limit follows from the sub-additivity of $b$-fold colorings, and the sub-additivity lemma. 
\end{defi}

The {\FCN} $\chi _{f}(G)$ can be obtained as a solution of the following linear program \cite{scheinerman2011fractional}:  
\begin{align}
\label{chi_f_optimal}
\chi _{f}(G) = \min\limits_{\forall x\in V} \left\{ \sum _{I\in {\mathcal {I}}(G)} x_{I} : \sum_{I\in\mathcal{I}(G,x)} x_I \geq 1 , \,\, x_I\geq 0 \right\}\ ,    
\end{align}
where $\mathcal{I}(G)$ is the set of all independent sets of $G$, and $\mathcal{I}(G,x)$ is the set of all  $\mathcal{I}(G)$ which include vertex $x\in V$.

\subsection{Joint Coloring of a $b$-Tuple of Characteristic Graphs}
\label{joint_coloring_b_tuple}

We denote a $b$-tuple of the characteristic graph $\Graph$ by $G_{X_1(S)}=\{G_{X_{1i}}:i\in S,\, |S|=b\}$, where each element $G_{X_{1i}}$, $i\in S$ is a replica of $\Graph$. We jointly color $G_{X_1(S)}$ such that $c_{{G_{X_1(S)}}}(X_1(S))=\{c_{{G_{X_{1i}}}}(X_{1i}):i\in S,\, |S|=b\}$. Using valid traditional colorings with $a$ colors in total across disjoint $|S|=b$ graphs in $S$, we denote by $c_{{G_{X_1(S)}}}(X_1(S))$ a valid traditional coloring of $G_{X_1(S)}$, and by $c^f_{{G_{X_1}}}(X_1)$ a valid fractional coloring that provides an $a:b$ coloring of $\Graph$. 
The entropy of $c_{{G_{X_1(S)}}}(X_1(S))$ is
\begin{align}
\label{entropy_subset_full_coloring_vs_fractional_coloring}
    H(c_{{G_{X_1(S)}}}(X_1(S))\vert {\bf X}_2) = H(c^f_{{G_{X_1}}}(X_1)\vert {\bf X}_2) \ .
\end{align}
We denote the collection of fractional chromatic entropies over the set of all valid $a:b$ colorings of $G_{X_1}$ given $X_2$ by 
\begin{align}
\label{set_of_valid_factional_colorings}
\mathcal{H}^{\chi_f}(\coloringf)=\{H(\coloringf):\, \coloringf \mbox{ is a} \nonumber\\
\mbox{valid a:b coloring of } G_{X_1} \vert\, X_2\} \ .
\end{align}

The minimum entropy of a fractional coloring can be found by minimizing across all valid $a:b$ colorings of $\Graph$. 
We next state a characterization of the fractional graph entropy using the notion of fractional chromatic entropy \cite{malak2022fractional}.

\begin{prop}
\label{FCE_characteristic_graph_n_limit}
({\bf Fractional graph entropy \cite{malak2022fractional}.}) The {\FCGE} of a graph $G_{X_1}$ is given as
\begin{multline}
\label{FCGE}
\HGfrac=\lim\limits_{n\to \infty}\frac{1}{n} \inf\limits_{b} \frac{1}{b} \min\nolimits_{\coloringpowernxf}\{H(\coloringpowernf):\, \\ 
\coloringpowernf \mbox{ is a valid a:b coloring of } \nPowerGraph \vert\, {\bf X}_2^n\} \ ,   
\end{multline}
where $\coloringpowernf$ is a fractional coloring variable that assigns $b$ colors to each vertex of $\nPowerGraph$ out of $a\ge b$ available colors.
\end{prop}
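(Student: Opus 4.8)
The plan is to establish (\ref{FCGE}) by reducing the fractional coloring problem to a traditional coloring problem on an enlarged family of replicas, verifying that the double normalization by $n$ and $b$ yields a well-defined limit through two applications of the sub-additivity (Fekete) lemma, and finally matching this limit to the operational fractional graph entropy $\HGfrac$ via a Slepian--Wolf achievability argument together with a converse. The starting point is that the integer conditional graph entropy admits the representation (\ref{chromatic_vs_characteristic}), so the fractional statement is what one obtains by replacing the single valid coloring at blocklength $n$ by a valid $a:b$ coloring and inserting the additional normalization $\inf_b\tfrac1b$.

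First I would reduce the $a:b$ coloring to a joint traditional coloring of a $b$-tuple. By Definition~\ref{valid_ab_coloring}, a valid $a:b$ coloring of $\nPowerGraph$ assigns to each vertex a $b$-subset of $[a]$ with adjacent vertices receiving disjoint subsets; by (\ref{entropy_subset_full_coloring_vs_fractional_coloring}), applied at blocklength $n$, the conditional entropy $H(\coloringpowernf \mid {\bf X}_2^n)$ of such a coloring equals that of a joint traditional coloring of the $b$ replicas $G^n_{X_1(S)}$. This turns the inner minimization in (\ref{FCGE}) into a minimum-entropy traditional coloring over a product family, which is exactly the object controlled by (\ref{chromatic_vs_characteristic_marginal})--(\ref{chromatic_vs_characteristic}).

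Second, fix $n$ and set $F_n(b)=\min_{\coloringpowernxf}\{H(\coloringpowernf\mid {\bf X}_2^n)\}$ over valid $a:b$ colorings. Given an $a_1{:}b_1$ coloring and an $a_2{:}b_2$ coloring on disjoint palettes, their vertexwise union is a valid $(a_1+a_2){:}(b_1+b_2)$ coloring, and since the entropy of a pair of color variables is at most the sum of their entropies, $F_n(b_1+b_2)\le F_n(b_1)+F_n(b_2)$. The sub-additivity lemma invoked in Definition~\ref{fractional_chromatic_number} then gives that $g(n):=\inf_b F_n(b)/b=\lim_{b\to\infty}F_n(b)/b$ exists. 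Next I would show $g$ is sub-additive in $n$: a valid fractional coloring of $\nPowerGraph$ and one of $G^m_{{\bf X}_1}$ combine, coordinate-block by coordinate-block, into a valid fractional coloring of $G^{n+m}_{{\bf X}_1}$, because an OR-power edge projects to an edge in at least one block (App.~\ref{sec:CharacteristicGraphs}); conditional entropy is sub-additive across the two blocks given ${\bf X}_2^{n+m}$, so $g(n+m)\le g(n)+g(m)$. A second application of Fekete's lemma yields that $\lim_n g(n)/n=\inf_n g(n)/n$ exists and equals the right-hand side of (\ref{FCGE}).

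Finally I would identify this quantity with $\HGfrac$. For achievability, choose $(n,b)$ whose normalized entropy is within $\varepsilon$ of the limit, form the joint color PMF of $\coloringpowernf$ and the corresponding $X_2$-coloring as in (\ref{color_class_distribution}) generalized to power graphs and $a:b$ colorings; under the positivity assumption $P_{X_1,X_2}>0$ the coloring connectivity condition holds, so $f({\bf X}_1^n,{\bf X}_2^n)$ is recoverable from the color pair via a look-up table, and Slepian--Wolf coding on the color PMF achieves rate $\tfrac{1}{nb}H(\coloringpowernf\mid {\bf X}_2^n)$, the conditioning on the $X_2$-coloring collapsing to conditioning on ${\bf X}_2^n$ by the non-overlapping fully connected MIS structure of \cite[Thm.~56]{feizi2014network}. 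For the converse, any valid fractional-coloring-based scheme induces a valid $a:b$ coloring of some $\nPowerGraph$, hence its per-symbol rate is bounded below by $g(n)/n$, and letting $n,b\to\infty$ matches the two bounds. The main obstacle I anticipate is making the double normalization rigorous: one must ensure the two sub-additivity arguments interlock so that the iterated limit $\lim_n\inf_b$ coincides with the operational quantity, and that replacing conditioning on the $X_2$-coloring by conditioning on ${\bf X}_2^n$ is lossless, which rests delicately on the positivity assumption and the resulting coloring connectivity condition.
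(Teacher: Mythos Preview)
Your proposal is correct and takes a genuinely different route from the paper. The paper's own argument does not proceed via two applications of Fekete's lemma together with an explicit Slepian--Wolf achievability and converse. Instead, for the normalization in $b$ it invokes Han's inequality on the monotone decrease of the average per-symbol entropy of a randomly drawn $b$-element subset (equation~(\ref{Han_joint_entropy}), citing \cite[Ch.~16.5]{cover2012elements} and \cite{te1978nonnegative}) to conclude that $\tfrac{1}{b}H(\coloringf)$ is nonincreasing in $b$, so that the infimum is a limit; and for the passage $n\to\infty$ it simply cites K\"orner's theorem \cite{Kor73} that normalized chromatic entropy converges to graph entropy, yielding (\ref{chromatic_vs_characteristic_fractional}). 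Your approach is more self-contained: the sub-additivity you exhibit (union of colorings on disjoint palettes for $b$, block concatenation along the OR power for $n$) makes the structure explicit without appealing to Han's inequality or to K\"orner's result as a black box, and the Slepian--Wolf achievability plus converse you sketch supplies the operational content that the paper leaves implicit in its citations. The trade-off is that the paper's proof is terser and reuses existing machinery, whereas yours is longer but isolates precisely where each of the two normalizations originates. One small point worth tightening: in your sub-additivity-in-$n$ step, the cleanest way to reach $g(n+m)\le g(n)+g(m)$ is to first use the sub-additivity in $b$ you already established to pass to a common large $b$ (so that both $F_n(b)/b$ and $F_m(b)/b$ are close to their infima) and only then concatenate; combining colorings with different fold numbers $b_1,b_2$ does not directly give the inequality after dividing by the resulting fold.
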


\begin{proof}
Given a collection of random variables, ${\bf Z}_1^n=(Z_1,Z_2,\dots, Z_n)$, and every $S\subseteq [n]$, denote by $Z(S)=\{Z_i:\,i\in S\}$. From \cite[Ch. 16.5]{cover2012elements} the average entropy in bits per symbol of a randomly drawn $b$-element subset $Z(S)$ of ${\bf Z}_1^n$ is 
\begin{align}
\label{Han_joint_entropy}
\frac{1}{{n\choose b}} \sum\limits_{S:|S|=b} \frac{H(Z(S))}{b} \ , 
\end{align}
which decreases monotonically in the size of the subset \cite{te1978nonnegative}.    
Using (\ref{Han_joint_entropy}), the {\FCE} of $G_{X_1}$ equals  
\begin{align}
\label{chromatic_fractional}
\HGchifrac=\inf\limits_{b} \frac{1}{b} \min_{\coloringxf} \mathcal{H}^{\chi_f}(\coloringf)\ ,
\end{align} 
where $\chi_f$ is the fractional chromatic number of $\Graph$. 
We can observe from (\ref{chromatic_fractional}) that as $b$ increases, the rate of functional compression via fractional coloring decreases.  
Exploiting K\"orner's result \cite{Kor73}, the {\FCGE} satisfies
\begin{align}
\label{chromatic_vs_characteristic_fractional}
\HGfrac=\lim\limits_{n\to \infty}\frac{1}{n}\HGchipowernfrac \ ,
\end{align}
where $\chi_f(\nPowerGraph)$ is the fractional chromatic number of $\nPowerGraph$.

Using (\ref{chromatic_fractional}) and (\ref{chromatic_vs_characteristic_fractional}) we can derive the {\em \FCGE}, which is a natural generalization of the {\em conditional graph entropy} given in (\ref{chromatic_vs_characteristic}).  
\end{proof}

For further technical details and proofs, as well as the coding gains achieved by coloring, we refer the reader to \cite{OR01,Kor73,AO96,feizi2014network}.

\section{Proof of Theorem \ref{fractional_graph_entropy_edge_weighted_graph}}
\label{app:fractional_graph_entropy_edge_weighted_graph}
The proof follows along the same lines as Prop.~\ref{FCE_characteristic_graph_n_limit}. 
Using (\ref{Han_joint_entropy}), the {\FCE} of $\GXone$ is given as 
\begin{align}
\label{chromatic_fractional_weighted}
H^{\chi_f}_{\GXone}(X_1\, \vert\, X_2)=\inf\limits_{b} \frac{1}{b} \min_{c^f_{\GXone}} \mathcal{H}^{\chi_f}(c^f_{\GXone}(X_1))\ ,
\end{align} 
where $\chi_f$ is the fractional chromatic number of $\GXone$, i.e., $\chi_f(\GXone)$, and $\mathcal{H}^{\chi_f}(c^f_{\GXone}(X_1))$ follows from (\ref{set_of_valid_factional_colorings}) by substituting $G_{X_1}$ with $\GXone$. 
We can observe from (\ref{chromatic_fractional_weighted}) that as $b$ increases, the rate of functional compression via fractional coloring decreases.  
Exploiting K\"orner's result \cite{Kor73}, the {\FCGE} satisfies
\begin{align}
\label{chromatic_vs_characteristic_fractional_weighted}
H^f_{\GXonen}(X_1\, \vert\, X_2)=\lim\limits_{n\to \infty}\frac{1}{n}H^{\chi_f}_{\GXonen}({\bf X_1}\vert {\bf X_2}) \ ,
\end{align}
where $\chi_f(\GXonen)$ is the fractional chromatic number of $\GXonen$.

Using (\ref{chromatic_fractional_weighted}) and (\ref{chromatic_vs_characteristic_fractional_weighted}) we can derive the {\em \FCGE} for the edge-weighted graph $\GXone$, which is given as 
\begin{multline}
H^f_{\GXone}(X_1\, \vert\, X_2)=\lim\limits_{n\to\infty} \frac{1}{n} \inf\limits_{b} \frac{1}{b} \min\limits_{c^f_{\GXonen}} \{H(c^f_{\GXonen}({\bf X}_1))\,:\nonumber\\
c^f_{\GXonen}({\bf X}_1) \mbox{ is a valid } a:b \mbox{ coloring of } \GXonen \, \vert\, {\bf X}_2\} \ ,
\end{multline}
which is a natural generalization of the {\em conditional graph entropy} given in (\ref{chromatic_vs_characteristic}) and the {\em \FCGE} in (\ref{FCGE}).  

\end{appendices}
\end{document}